\documentclass[11pt]{article}

\usepackage[margin=1in]{geometry}
\usepackage{amsmath,amssymb,amsfonts,amsthm,mathtools}
\usepackage{booktabs,longtable,tabularx,array,multirow}
\usepackage{graphicx}
\usepackage{enumitem}
\usepackage{xcolor}
\usepackage{microtype}
\usepackage[numbers,sort&compress]{natbib}
\usepackage{xurl}
\usepackage{hyperref}
\usepackage{float}
\usepackage{authblk}
\hypersetup{colorlinks=true,linkcolor=blue,citecolor=blue,urlcolor=blue,hypertexnames=false}

\newtheorem{proposition}{Proposition}

\newcommand{\R}{\mathbb{R}}
\newcommand{\C}{\mathbb{C}}
\newcommand{\CP}{\mathbb{CP}}
\newcommand{\Gr}{\operatorname{Gr}}

\newcommand{\tr}{\operatorname{tr}}

\newcommand{\rank}{\operatorname{rank}}

\newcommand{\ket}[1]{|#1\rangle}
\newcommand{\bra}[1]{\langle #1|}
\newcommand{\braket}[2]{\langle #1|#2\rangle}

\title{Invariance Audits for Quantum Kernels and Variational Rewinding:\\
A Real-to-Hermitian Taxonomy of Projector, Flag, Anchor, and Density Geometry}
\author[1,2]{Azadeh Alavi\thanks{All authors contributed equally. All authors are corresponding authors. Correspondence: \href{mailto:azadeh.alavi@rmit.edu.au}{azadeh.alavi@rmit.edu.au}; \href{mailto:fatima@pr2aid.com}{fatima@pr2aid.com}; \href{mailto:znj@pr2aid.com}{znj@pr2aid.com}.}}
\author[2]{Fatemeh Kouchmeshki}
\author[2]{Hossein Akhoundi}
\affil[1]{RMIT University, Melbourne, VIC, Australia}
\affil[2]{Pattern Recognition Pty Ltd, Australia}
\date{8 July 2026}

\begin{document}
\maketitle

\begin{abstract}
Machine-learning models often replace vectors by normalized directions, projectors, covariances, subspaces, ordered flags, quantum states, or density operators before any classifier is fitted. This replacement is an invariance decision: it determines which distinctions are kept and which are quotiented out. We develop a self-contained real-to-Hermitian taxonomy for auditing such representations in quantum machine learning. On the real side, we formalize Grassmann and flag projector kernels, prove positive semidefiniteness and block-gauge invariance of a weighted flag kernel, and give a same-span block-swap witness showing when whole-span Grassmann geometry must fail while ordered flags succeed. On the quantum side, we prove that a noiseless fidelity kernel is exactly a Hilbert--Schmidt inner product of rank-one Hermitian projectors, $K(x,y)=|\langle\phi(x)|\phi(y)\rangle|^2=\tr(P_xP_y)$, and that a QVR-style return probability is exactly an anchor-overlap score $p_\theta(x)=\tr(P_xA_\theta)$. Rank-$r$ returns are complex Grassmann anchors, while mixed or multimodal class models are naturally density or positive-semidefinite anchors. Controlled vector, subspace, statevector, anomaly, finite-shot, and quotient-witness experiments support the same conclusion: quantum and geometric lifts are useful when their invariances match the task, and fail correctly when discarded information is label-bearing. The paper makes no hardware-speedup or quantum-advantage claim.
\end{abstract}

\noindent\textbf{Keywords.} Quantum machine learning; quantum kernels; variational rewinding; Hermitian projectors; Grassmann and flag manifolds; invariance audit.\\
\noindent\textbf{MSC/AMS subject classifications.} 53C30, 62H30, 68T07, 81P68, 15A18.

\section{Introduction}
A machine-learning sample is often written as a vector $x\in\R^d$, but many models do not use $x$ directly. They first replace it by another object:
\begin{equation}
 x\longmapsto \frac{x}{\|x\|},\qquad
 x\longmapsto xx^T,\qquad
 x\longmapsto C_x,\qquad
 x\longmapsto X_xX_x^T,\qquad
 x\longmapsto \ket{\psi(x)}\bra{\psi(x)}.
\end{equation}
Each replacement is a scientific decision. Unit normalization removes radius. A real rank-one projector removes sign. A Hermitian pure-state projector removes global phase. A covariance matrix emphasizes local variability and correlation. A Grassmann point keeps a subspace but removes the basis used to represent it. A flag keeps ordered block information. A quantum circuit maps the sample into Hilbert space and then scores overlaps, measurements, or return events.

The thesis of this paper is that these replacements should be audited as \emph{invariance choices}. We should not ask only whether a method is Euclidean, geometric, or quantum. We should ask what information the lift keeps, what it erases, and whether the erased information is truly nuisance for the task.

This manuscript focuses on the real-to-Hermitian geometry behind these choices: real projectors, Grassmann and flag projectors, Hermitian projectors, complex projective geometry, quantum fidelity kernels, and QVR-style return scores. It is related to the observation that supervised quantum machine-learning models can be understood as kernel methods \cite{schuld_kernel_methods}; our contribution is complementary. We give the exact projector, flag, anchor, density, and decision-operator dictionary, together with invariance witnesses that identify when a chosen quotient is appropriate. The scope is deliberately limited to fixed encoders, kernels, anchor returns, finite-shot approximations, and density or subspace extensions. Trainable circuit-design architectures and implementation details not necessary for the mathematical results are outside the scope of this paper.

\subsection{Why QVR requires a precise geometric statement}
Quantum Variational Rewinding (QVR) was introduced as a hybrid quantum-classical method for time-series anomaly detection: data are encoded into quantum states, a parameterized unitary is trained so normal data return close to a target event, and new samples are scored by a return or reconstruction statistic \cite{baker_qvr_2022,pennylane_qvr}. This paper does not claim to reproduce that anomaly-detection benchmark. Instead, it asks a more basic question: once the encoder, unitary, and success measurement are fixed, what geometric object is the return score computing?

The answer is exact. If $P_x=\ket{\phi(x)}\bra{\phi(x)}$ and $A_\theta=U_\theta^*\Pi U_\theta$, then
\begin{equation}
 \|\Pi U_\theta\ket{\phi(x)}\|_2^2=\tr(P_xA_\theta).
\end{equation}
Thus the ideal QVR return probability already lives in Hermitian projector geometry. The difference between QVR, a quantum-kernel SVM, a density-center model, and a rank-$r$ subspace-return model is not the presence or absence of ``quantumness'' at this algebraic level. It is the choice of encoder, anchor family, observable, classifier, finite-shot/noise model, and training objective.

\subsection{Contributions}
The paper contributes the following.
\begin{enumerate}[leftmargin=2em]
\item We give a self-contained taxonomy and dictionary from real projective, covariance, Grassmann, flag, complex projective, anchor, and density-matrix lifts to the invariances they impose.
\item We prove the real geometric guardrails needed by the taxonomy: the weighted flag projection kernel is positive semidefinite and block-gauge invariant; whole-span Grassmann features cannot separate same-span block swaps; and monotone spectral activation followed by top-$p$ reprojection is identity on exact projectors.
\item We prove that a noiseless quantum fidelity kernel is exactly a Hermitian projector kernel and therefore a positive semidefinite kernel on the embedded projective state manifold.
\item We prove that a QVR-style return probability is exactly an anchor-overlap score, with rank-one, rank-$r$, density, and general Hermitian-observable variants corresponding to distinct geometric models.
\item We separate three objects that are easily conflated: pairwise quantum-kernel SVMs, single-anchor QVR-style return models, and density/subspace class-center models.
\item We report controlled real-geometry, statevector, finite-shot, QVR-style anchor, density-center, anomaly, block-swap, and phase/radius witness experiments. We treat them as invariance witnesses rather than operational benchmarks.
\item We state conservative limitations that distinguish the paper's mathematical and diagnostic claims from hardware-speedup, applied-system, or quantum-advantage claims.
\end{enumerate}

\section{A taxonomy of quantum geometric representations}
The identities in this paper can be read as a descriptive taxonomy of quantum geometric representations. The taxonomy is mathematical rather than procedural: it classifies the object being scored, the quotient or invariance imposed by that object, and the type of score produced. It is not a runtime execution policy or commercial software architecture.

\begin{table}[H]
\centering
\caption{Taxonomy of representation objects used in the audit. The table is descriptive: it states the geometry and imposed invariance of each representation, not an operational workflow or deployment rule.}
\label{tab:taxonomy}
\scriptsize
\renewcommand{\arraystretch}{1.15}
\begin{tabularx}{\textwidth}{>{\raggedright\arraybackslash}p{0.13\textwidth}
>{\raggedright\arraybackslash}p{0.14\textwidth}
>{\raggedright\arraybackslash}p{0.21\textwidth}
>{\raggedright\arraybackslash}p{0.17\textwidth}
>{\raggedright\arraybackslash}X}
\toprule
Family & Object & Main invariance / quotient & Typical score & Main audit question\\
\midrule
Vector & $x\in\R^d$ & none beyond preprocessing & distance, dot product, RBF & Are scale, units, and preprocessing fitted safely?\\
Unit direction & $x/\|x\|$ & radius removed & cosine similarity & Is magnitude nuisance or signal?\\
Real projector & $uu^T$ & sign and radius removed & $\tr(P_uP_v)$ & Is line information sufficient?\\
Magnitude PSD & $xx^T$ & sign removed, scale retained & $(x^Ty)^2$ & Should magnitude be preserved?\\
Grassmann & $XX^T$ & basis removed inside a subspace & $\|X^TY\|_F^2$ & Is whole-span information enough?\\
Flag & $(X_bX_b^T)_b$ & basis removed inside ordered blocks & block-weighted trace sum & Does block order carry signal?\\
Pure quantum state & $\ket{\psi}$ & normalized Hilbert-state representation & amplitudes, measurements & What encoder-induced information is present?\\
Hermitian projector & $\ket{\psi}\bra{\psi}$ & global phase removed & $\tr(P_xP_y)$ & Is projective phase quotient appropriate?\\
Rank-$r$ anchor & $A\in\Gr_{\C}(r,D)$ & basis inside success subspace removed & $\tr(P_xA)$ & Does the target occupy a subspace?\\
Density anchor & $\rho\succeq0$, $\tr(\rho)=1$ & pure state replaced by mixture & $\tr(P_x\rho)$ or fidelity & Is the class/noise model mixed or multimodal?\\
Hermitian decision operator & $B=B^*$ & margin operator need not be PSD/projector & $\tr(P_xB)+b$ & Is the model a many-anchor margin rule?\\
Finite-shot observation & counts from $p(x)$ & exact probability observed with sampling noise & binomial estimate & Does the conclusion survive shot noise?\\
\bottomrule
\end{tabularx}
\end{table}

This taxonomy clarifies why apparently different quantum-machine-learning objects often live in the same Hilbert--Schmidt geometry while still representing different modelling assumptions. A fidelity-kernel circuit, a pure return model, a rank-$r$ return event, a density center, and a kernel-SVM decision operator may all be written as trace expressions, but they impose different invariances and support different decision functions. The audit question is therefore not merely whether an object is quantum or classical. The audit question is whether the chosen quotient is scientifically justified for the task.

\section{Geometric dictionary: what each lift keeps and removes}
This section states the representation choices in data-language before moving to quantum circuits.

\subsection{Unit directions and real projective projectors}
Let $x\in\R^d\setminus\{0\}$ and $u=x/\|x\|_2$. The rank-one projector
\begin{equation}
 P_u=uu^T
\end{equation}
keeps the line spanned by $x$ and removes radius. It also identifies $u$ and $-u$ because $uu^T=(-u)(-u)^T$. A classifier using only $P_u$ cannot distinguish positive scalings, nor a sign flip. This is beneficial only when scale and sign are nuisance.

The associated squared-cosine kernel is
\begin{equation}
 K_{\mathrm{proj}}(x,y)=\tr(P_uP_v)=(u^Tv)^2=\frac{(x^Ty)^2}{\|x\|_2^2\|y\|_2^2}.
\end{equation}

\begin{proposition}[Radius loss under projectivization]
Let $\Phi(x)=xx^T/\|x\|^2$ for $x\neq0$. Then $\Phi(\alpha x)=\Phi(x)$ for all $\alpha\neq0$. Hence no predictor measurable with respect to $\Phi(x)$ can distinguish labels that depend only on $\|x\|$ on a support containing nonzero radial variation along the same line.
\end{proposition}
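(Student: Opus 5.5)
The plan has two parts: an algebraic invariance identity, and a factorization argument that turns it into a statement about predictors.

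For the identity, take $x\neq0$ and $\alpha\neq0$. Then
\begin{equation*}
\Phi(\alpha x)=\frac{(\alpha x)(\alpha x)^T}{\|\alpha x\|^2}=\frac{\alpha^2\,xx^T}{\alpha^2\|x\|^2}=\Phi(x).
\end{equation*}
Here $\alpha^2>0$ cancels, and because $\alpha$ may be negative this covers both positive rescaling and the sign flip noted earlier for $P_u$.

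For the consequence, I will make ``measurable with respect to $\Phi(x)$'' precise in the standard way. A predictor $f$ is $\sigma(\Phi)$-measurable exactly when $f=g\circ\Phi$ for some measurable $g$ on the space of rank-one projectors. This is the Doob--Dynkin factorization lemma, and it applies because the target, symmetric $d\times d$ matrices, is a standard Borel space.

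Combining the two parts, $f(\alpha x)=g(\Phi(\alpha x))=g(\Phi(x))=f(x)$. So every such predictor is constant on each punctured line $\{\alpha x:\alpha\neq0\}$.

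Now suppose the label is $y=h(\|x\|)$ and the support contains points $x_1=\alpha x$ and $x_2=\beta x$ on the same line with $h(|\alpha|\|x\|)\neq h(|\beta|\|x\|)$. That is what ``nonzero radial variation along the same line'' should mean. Then $f$ gives the same output at $x_1$ and $x_2$, while the true labels differ. Hence $f$ errs on at least one of them, so no such $f$ can distinguish the labels.

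For a probabilistic version, if the set of such conflicting pairs carries positive mass, then $f$ cannot be almost-surely correct. Conditional on the line $\Phi(x)$, the label is non-degenerate, so the Bayes error given $\Phi(x)$ is positive.

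The main obstacle is not the algebra but the interpretation of the second sentence. ``Distinguish'' could be read pointwise, meaning there is no exact classifier, or in the Bayes-risk sense. I will state the pointwise version as the primary claim, since it follows immediately from the factorization. I will then note the measure-theoretic strengthening, which needs only the observation that $\mathbb{E}[y\mid\Phi(x)]$ is nondegenerate on a set of positive measure. I will also briefly flag that the radius $\|x\|$ is not $\sigma(\Phi)$-measurable whenever the conditional law of $\|x\|$ given $\Phi(x)$ is non-degenerate.
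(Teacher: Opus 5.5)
Your proof is correct and follows essentially the same route as the paper. Both use the cancellation $\alpha^2xx^T/(\alpha^2\|x\|^2)=\Phi(x)$ together with the observation that a $\Phi$-measurable predictor must agree on points with equal $\Phi$-value, and your Bayes-error version matches the paper's conditional-label-law formulation. One minor correction: in the Doob--Dynkin lemma the standard-Borel hypothesis belongs on the predictor's codomain (labels or $\mathbb{R}$), not on the space of symmetric matrices, and for constancy on lines it suffices that singletons in the output space are measurable, since $f^{-1}(\{f(x)\})=\Phi^{-1}(B)$ for some $B$ contains $\alpha x$ whenever it contains $x$.
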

\begin{proof}
The equality is immediate:
\begin{equation}
\Phi(\alpha x)=\frac{\alpha^2xx^T}{\alpha^2\|x\|^2}=\Phi(x).
\end{equation}
If two samples have the same value of $\Phi$ but different conditional label laws on a set of positive probability, the label law is not a measurable function of $\Phi(x)$ up to null sets.\qedhere
\end{proof}

\subsection{Magnitude-preserving PSD lift}
If radius is signal, the unnormalized positive semidefinite lift
\begin{equation}
 x\longmapsto xx^T
\end{equation}
retains magnitude. Its Hilbert--Schmidt kernel is
\begin{equation}
 K_{\mathrm{PSD}}(x,y)=\tr(xx^Tyy^T)=(x^Ty)^2,
 \qquad K_{\mathrm{PSD}}(x,x)=\|x\|_2^4.
\end{equation}
This lift lives in the PSD cone rather than a fixed-trace projective manifold. The difference matters: a radius-label task is impossible after unit normalization but remains visible in $xx^T$.

\subsection{Covariance, density, Grassmann, and flags}
A local covariance matrix is meaningful only after a local cloud has been chosen. For a training-safe neighborhood $\mathcal N_i$ of sample $x_i$, one may define
\begin{equation}
 C_i=\frac{1}{|\mathcal N_i|-1}\sum_{z\in\mathcal N_i}(z-\bar z_i)(z-\bar z_i)^T+\varepsilon I.
\end{equation}
The trace-normalized matrix
\begin{equation}
 \rho_i=\frac{C_i}{\tr(C_i)}
\end{equation}
is PSD and trace one, analogous to a mixed quantum state. Scores may use $\tr(\rho_i\rho_j)$, Bures/Uhlmann fidelity, Frobenius distance, or log-SPD geometry when $C_i$ is regularized into the SPD cone \cite{bhatia_positive_definite,horn_johnson_matrix_analysis}.

Taking the top $p$ eigenvectors $X_i\in\R^{n\times p}$ with $X_i^TX_i=I_p$ gives a Grassmann projector $P_i=X_iX_i^T$. The projection kernel is
\begin{equation}
 K_{\Gr}(X,Y)=\tr(XX^TYY^T)=\|X^TY\|_F^2.
\end{equation}
A flag representation keeps ordered blocks $\{X_bX_b^T\}_{b=1}^B$, often with kernel
\begin{equation}
 K_{\mathcal F}(X,Y)=\sum_{b=1}^B\alpha_b\tr(P_{x,b}P_{y,b}),\qquad \alpha_b\ge0.
\end{equation}

The word ordered is crucial: a whole-span Grassmann representation can erase a block swap, while a flag preserves it. This is the real-valued counterpart of the complex rank-$r$ anchor story below \cite{hamm_lee_grassmann,harandi_grassmann_kernels}.

\begin{proposition}[Weighted flag projection kernel]
Let $X=[X_1,\ldots,X_B]$ and $Y=[Y_1,\ldots,Y_B]$ be flag representatives with $X_b,Y_b\in\R^{n\times p_b}$ and block-wise orthonormal columns. For weights $\alpha_b\ge0$, define
\begin{equation}
 K_{\mathcal F}(X,Y)=\sum_{b=1}^B\alpha_b\|X_b^TY_b\|_F^2.
\end{equation}
Then $K_{\mathcal F}$ is positive semidefinite and invariant under block-wise basis changes $X_b\mapsto X_bQ_b$ and $Y_b\mapsto Y_bR_b$ with $Q_b,R_b\in\mathcal O(p_b)$.
\end{proposition}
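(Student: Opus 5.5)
The plan is to write each block term as a Hilbert--Schmidt inner product of projectors. Both properties then follow from the fact that projectors depend only on the span of each block, and that Gram matrices of inner products are positive semidefinite.

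First, for each block $b$ set $P_{X,b}=X_bX_b^T$ and $P_{Y,b}=Y_bY_b^T$. Using cyclicity of the trace, we verify
\begin{equation}
\|X_b^TY_b\|_F^2=\tr(Y_b^TX_bX_b^TY_b)=\tr(P_{X,b}P_{Y,b})=\langle P_{X,b},P_{Y,b}\rangle_{\mathrm{HS}}.
\end{equation}

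\emph{Invariance.} If $Q_b\in\mathcal O(p_b)$, then $(X_bQ_b)(X_bQ_b)^T=X_bQ_bQ_b^TX_b^T=P_{X,b}$, and likewise for $Y_bR_b$. So every block term, and hence $K_{\mathcal F}$, is unchanged under block-wise basis changes. The same fact shows that $K_{\mathcal F}$ is well defined as a function of the flag (the tuple of block projectors) rather than of the chosen representative.

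\emph{Positive semidefiniteness.} Define the feature map
\begin{equation}
\Psi(X)=\bigl(\sqrt{\alpha_1}P_{X,1},\ldots,\sqrt{\alpha_B}P_{X,B}\bigr)
\end{equation}
into the direct sum of $B$ copies of the symmetric matrices with the Frobenius inner product. The formula from the first step gives $K_{\mathcal F}(X,Y)=\langle\Psi(X),\Psi(Y)\rangle$, which needs $\alpha_b\ge0$ so that the square roots are real. For any finite set of representatives $X^{(1)},\dots,X^{(m)}$ and any $c\in\R^m$,
\begin{equation}
\sum_{i,j}c_ic_jK_{\mathcal F}(X^{(i)},X^{(j)})=\Bigl\|\sum_i c_i\Psi(X^{(i)})\Bigr\|^2\ge0.
\end{equation}
Symmetry of $K_{\mathcal F}$ follows from $\tr(P_{X,b}P_{Y,b})=\tr(P_{Y,b}P_{X,b})$. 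An equivalent route is to note that each block kernel is PSD and that nonnegative combinations of PSD kernels are PSD.

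No step is genuinely hard. The only point needing care is to state explicitly that the kernel factors through the block projectors. This is what makes PSD a statement about flags and not only about matrix representatives, and it also pins down where the hypothesis $\alpha_b\ge0$ is used.
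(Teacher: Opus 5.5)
Your proof is correct and follows essentially the same route as the paper. Both rewrite each block term as $\tr(X_bX_b^TY_bY_b^T)$, an inner product of projector features, and both get invariance from $(X_bQ_b)(X_bQ_b)^T=X_bX_b^T$. For positive semidefiniteness the paper sums the block kernels with nonnegative weights, while you fold $\sqrt{\alpha_b}$ into a single direct-sum feature map; the two are equivalent, and you note the paper's version as your alternative.
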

\begin{proof}
For each block,
\begin{equation}
\|X_b^TY_b\|_F^2=\tr(X_bX_b^TY_bY_b^T)=\langle \operatorname{vec}(X_bX_b^T),\operatorname{vec}(Y_bY_b^T)\rangle .
\end{equation}
Thus each block kernel is an ordinary Euclidean inner product after the projection feature map. A nonnegative weighted sum of positive semidefinite kernels is positive semidefinite. Invariance follows from $(X_bQ_b)(X_bQ_b)^T=X_bX_b^T$ and similarly for $Y_b$.
\end{proof}

\begin{proposition}[Whole-span obstruction for ordered flags]
Let $X=[X_1,X_2]$ and $Y=[Y_1,Y_2]$ be two two-block flag representatives with the same total span, so that $XX^T=YY^T$ when $X$ and $Y$ are treated as whole-span matrices. Any classifier that depends only on the total projector assigns identical features to $X$ and $Y$. If $X_1X_1^T\ne Y_1Y_1^T$, a flag feature retaining block projectors can distinguish them.
\end{proposition}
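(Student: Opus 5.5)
The plan has two halves, mirroring the two sentences of the statement: an indistinguishability claim that is essentially a factorization argument, and a distinguishability claim that needs one explicit separating feature.

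\textbf{Indistinguishability.} Write the whole-span feature as $\Phi_{\Gr}(X)=XX^T$, with $X=[X_1,X_2]$ having orthonormal columns. Any classifier $f$ that ``depends only on the total projector'' factors as $f=g\circ\Phi_{\Gr}$. Since $\Phi_{\Gr}(X)=\Phi_{\Gr}(Y)$ by hypothesis, we get $f(X)=g(XX^T)=g(YY^T)=f(Y)$. This is exactly the measurability argument used in the proof of the radius-loss proposition.

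It is worth noting that $XX^T=X_1X_1^T+X_2X_2^T$ when the two blocks are mutually orthogonal. So the total projector is the sum of the block projectors, and a block swap $Y_1=X_2$, $Y_2=X_1$ leaves this sum unchanged. This is the mechanism behind the obstruction, and I will record it as a remark.

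\textbf{Distinguishability.} Use the flag feature $\Phi_{\mathcal F}(X)=(X_1X_1^T,X_2X_2^T)$, or equivalently the weighted flag kernel from the previous proposition with, say, $\alpha_1=1$ and $\alpha_2=0$. Since $X_1X_1^T\neq Y_1Y_1^T$, the first coordinates of $\Phi_{\mathcal F}(X)$ and $\Phi_{\mathcal F}(Y)$ differ. Hence the linear functional
\[
Z\mapsto \tr\bigl(Z\,(X_1X_1^T-Y_1Y_1^T)\bigr)
\]
separates them: its value at $X_1X_1^T$ minus its value at $Y_1Y_1^T$ equals $\|X_1X_1^T-Y_1Y_1^T\|_F^2>0$. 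Thresholding at the midpoint then gives a classifier built from the flag feature that assigns different labels to $X$ and $Y$.

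To phrase this in kernel language, I would add that the flag-kernel distance is positive:
\[
K_{\mathcal F}(X,X)+K_{\mathcal F}(Y,Y)-2K_{\mathcal F}(X,Y)=\|X_1X_1^T-Y_1Y_1^T\|_F^2>0.
\]
The last expression is nonzero precisely because the block projectors differ.

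\textbf{Main obstacle.} The only delicate point is interpretive rather than technical: pinning down ``depends only on the total projector'' as factorization through $\Phi_{\Gr}$. I would also check that the flag feature is well defined, that is, invariant under block-wise rotations, so the separating functional does not depend on the choice of representatives. That invariance is already supplied by the weighted flag projection kernel proposition.

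Finally, to show the hypotheses are compatible, I would include a concrete witness. Take $n=2$, $X=[e_1,e_2]$ and $Y=[e_2,e_1]$. Then $XX^T=YY^T=I$, while $X_1X_1^T=e_1e_1^T\neq e_2e_2^T=Y_1Y_1^T$.
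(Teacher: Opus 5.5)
Your proposal is correct and follows the paper's proof. The paper argues exactly that a whole-span feature is a function of $XX^T$, so it agrees on $X$ and $Y$, while the flag feature carries $X_1X_1^T$ and $Y_1Y_1^T$ separately; your separating functional with gap $\|X_1X_1^T-Y_1Y_1^T\|_F^2>0$ and the witness $X=[e_1,e_2]$, $Y=[e_2,e_1]$ make the paper's ``can differ'' step explicit.
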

\begin{proof}
A whole-span Grassmann feature is a function of the total projector $XX^T$. If $XX^T=YY^T$, the features are identical. The ordered flag feature contains $X_1X_1^T$ and $Y_1Y_1^T$ separately, so it can differ even when the total projectors agree.
\end{proof}

\begin{proposition}[Monotone spectral activation is identity on exact projectors after top-$p$ reprojection]
Let $P$ be a rank-$p$ orthogonal projector with eigenvalues $1,\ldots,1,0,\ldots,0$. Let $f:\R\to\R$ be strictly increasing. If $f$ is applied to the eigenvalues of $P$ and the result is projected back to the Grassmannian by selecting the top $p$ eigenvectors, the recovered projector is exactly $P$, up to arbitrary rotations inside equal-eigenvalue subspaces and numerical roundoff.
\end{proposition}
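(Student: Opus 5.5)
The plan is to use the spectral decomposition of $P$ directly. Since $P$ is a real symmetric (or Hermitian) orthogonal projector of rank $p$, write $P=U\operatorname{diag}(I_p,0_{n-p})U^T$ with $U=[U_1,U_2]$ orthogonal, where the columns of $U_1\in\R^{n\times p}$ span $\operatorname{range}(P)$ and $P=U_1U_1^T$. The spectral activation is defined functionally, $f(P)=Uf(\Lambda)U^T$ with $\Lambda=\operatorname{diag}(1,\ldots,1,0,\ldots,0)$. First I check that this is well defined, independent of the choice of eigenbasis. The reason is that $f(P)=f(1)P+f(0)(I-P)$, since $P$ and $I-P$ are the spectral projectors onto the two eigenspaces. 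This identity is the key algebraic step, and I would state it explicitly, because it avoids any dependence on the particular $U$.

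From $f(P)=f(1)P+f(0)(I-P)$ the eigenstructure of $f(P)$ can be read off. On $\operatorname{range}(P)$ it acts as $f(1)$, and on $\ker(P)$ it acts as $f(0)$. Strict monotonicity gives $f(1)>f(0)$. So the eigenvalue $f(1)$ has multiplicity exactly $p$ with eigenspace $\operatorname{range}(P)$, and the eigenvalue $f(0)$ has multiplicity $n-p$ with eigenspace $\ker(P)$. Because the gap $f(1)-f(0)$ is strictly positive, the top-$p$ eigenvalues are unambiguously the $f(1)$ ones. There is no tie across the cut between position $p$ and position $p+1$.

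The reprojection step selects any orthonormal $V\in\R^{n\times p}$ of top-$p$ eigenvectors. By the previous paragraph, $\operatorname{span}(V)=\operatorname{range}(P)=\operatorname{span}(U_1)$. Hence $V=U_1Q$ for some $Q\in\mathcal O(p)$, and $VV^T=U_1QQ^TU_1^T=U_1U_1^T=P$. This is the same block-gauge invariance used in the weighted flag kernel proposition. It accounts for the ``arbitrary rotations inside equal-eigenvalue subspaces'' clause: $V$ itself is determined only up to $Q$, but the recovered projector is exactly $P$.

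The only genuine obstacle is the degenerate boundary cases and the roundoff clause. When $p=0$ or $p=n$ there is only one eigenspace, and the claim is trivial. In floating point, the computed $f(P)$ is a perturbation of size $O(\epsilon)$. Here I would invoke the Davis--Kahan $\sin\Theta$ theorem, which bounds the subspace error of the computed top-$p$ space by $O(\epsilon/(f(1)-f(0)))$. So the recovered projector is $P$ up to a roundoff term controlled by the spectral gap that strict monotonicity guarantees. I would mention this only as a remark rather than prove it, since the exact statement is the algebraic one.
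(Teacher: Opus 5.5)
Your proof is correct and follows the same route as the paper, whose argument is exactly that strict monotonicity gives $f(1)>f(0)$, so $f(P)$ has the same top-$p$ eigenspace as $P$ and reprojection returns $P$. Your identity $f(P)=f(1)P+f(0)(I-P)$ and the gauge step $V=U_1Q$, $VV^T=P$ spell out, in basis-independent form, what the paper asserts in one line. The Davis--Kahan remark is an extra the paper does not make; note that the floating-point perturbation of $f(P)$ scales like $\epsilon\max(|f(0)|,|f(1)|)$ rather than $\epsilon$, and the bound should be stated relative to that.
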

\begin{proof}
Since $f$ is strictly increasing, $f(1)>f(0)$. The transformed matrix has the same top-$p$ eigenspace as $P$. Reconstructing the rank-$p$ projector from that eigenspace therefore returns $P$.
\end{proof}

These guardrails prevent overclaiming. Flag kernels add ordered-block information; whole-span Grassmann kernels intentionally discard it. A monotone eigenvalue activation on exact projectors is not a nontrivial nonlinear Grassmann layer merely because it is written spectrally. Nonlinearity or discrimination must enter through a non-isometric map, a kernel, a readout, a loss, or task-dependent feature construction.

\subsection{Complex projective space and Hermitian pure states}
A normalized complex state is
\begin{equation}
 \ket{\psi}\in\C^D,
 \qquad \|\psi\|_2=1.
\end{equation}
The global phase $e^{i\theta}\ket{\psi}$ is removed by the Hermitian projector
\begin{equation}
 P_\psi=\ket{\psi}\bra{\psi}=\psi\psi^*,
\end{equation}
which satisfies $P_\psi=P_\psi^*$, $P_\psi^2=P_\psi$, $\tr(P_\psi)=1$, and $\rank(P_\psi)=1$. The resulting manifold is complex projective space $\CP^{D-1}\cong\Gr_{\C}(1,D)$.

More generally,
\begin{equation}
 \Gr_{\C}(r,D)=\{VV^*:V\in\C^{D\times r},\ V^*V=I_r\}.
\end{equation}
The real Grassmann formula becomes the Hermitian formula by replacing transpose by conjugate transpose and orthogonal gauge by unitary gauge \cite{edelman_geometry_algorithms,absil_optimization}.

\begin{proposition}[Global phase quotient]
For any normalized $\psi\in\C^D$ and any $\theta\in\R$,
\begin{equation}
 (e^{i\theta}\psi)(e^{i\theta}\psi)^*=\psi\psi^*.
\end{equation}
Thus no projector-based state representation can recover a label that depends only on global phase.
\end{proposition}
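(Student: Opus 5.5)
The plan has two steps: first verify the displayed identity by direct computation, then deduce the "no recovery" consequence by the same measurability argument used for the radius-loss proposition.

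For the identity, I use that the conjugate transpose of a scalar multiple satisfies $(c\psi)^*=\bar c\,\psi^*$. With $c=e^{i\theta}$ this gives
\begin{equation}
 (e^{i\theta}\psi)(e^{i\theta}\psi)^*=e^{i\theta}\overline{e^{i\theta}}\,\psi\psi^*=|e^{i\theta}|^2\psi\psi^*=\psi\psi^*,
\end{equation}
since $|e^{i\theta}|=1$ for real $\theta$. Normalization of $\psi$ is not needed for this identity. It only guarantees that $P_\psi$ is the rank-one trace-one projector described just above the statement.

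For the consequence, I take a \emph{projector-based representation} to be any map of the form $\psi\mapsto g(\psi\psi^*)$, where $g$ is measurable. By the identity, such a map sends every element $e^{i\theta}\psi$ of a phase orbit to the same value. Now suppose the label is a function of global phase alone, and that it differs on a positive-probability set of pairs $\psi$ and $e^{i\theta}\psi$. Any predictor $h(\psi\psi^*)$ assigns both members of such a pair the same output. Hence it must misclassify at least one member of each pair. Equivalently, the conditional label law is not a measurable function of $P_\psi$, up to null sets, which is exactly the argument closing the radius-loss proposition.

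The main obstacle is interpretive rather than computational. Global phase is not a well-defined function on states that are identified up to phase, so ``a label depending only on global phase'' only makes sense relative to a chosen representative $\psi\in\C^D$, for example an amplitude-encoded vector before projectivization. I would state this explicitly. I would also note that relative phases between amplitudes are \emph{not} erased by $P_\psi$, so the obstruction applies to global phase only.
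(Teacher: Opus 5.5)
Your proposal is correct and follows the paper's route. The paper's proof is the one-line scalar cancellation $e^{i\theta}e^{-i\theta}=1$, which is exactly your computation via $(c\psi)^*=\bar c\,\psi^*$. Your measurability argument for the ``no recovery'' consequence, which mirrors the radius-loss proposition, and your remarks on the representative-dependence of global phase and the survival of relative phase spell out what the paper leaves implicit.
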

\begin{proof}
The scalar phase cancels: $e^{i\theta}e^{-i\theta}=1$.\qedhere
\end{proof}

\section{Quantum feature maps as Hermitian lifts}
Quantum machine-learning models encode classical data into quantum states and then evaluate measurements, overlaps, or learned circuits \cite{nielsen_chuang,schuld_quantum_ml,havlicek_quantum_feature,schuld_kernel_methods,schuld_encoding,ibm_encoding}. At the noiseless statevector level, many such quantities are exactly projector or density-matrix quantities.

\subsection{Amplitude encoding}
For amplitude encoding, pad $x\in\R^d$ to dimension $D=2^q$ and normalize:
\begin{equation}
 \ket{\psi(x)}=\frac{\operatorname{pad}(x)}{\|\operatorname{pad}(x)\|_2}\in\C^D.
\end{equation}
The amplitude fidelity kernel is
\begin{equation}
 K_{\mathrm{amp}}(x,y)=|\braket{\psi(x)}{\psi(y)}|^2.
\end{equation}
For real input, it equals the squared-cosine kernel. Hence amplitude encoding is powerful when direction is signal and radius is nuisance, but it is destructive when radius carries the label.

\subsection{Product-angle circuits}
A simple product-angle feature map prepares
\begin{equation}
 U_{\mathrm{prod}}(z)=\bigotimes_{j=1}^qR_Y(\alpha z_j),
 \qquad \ket{\phi(z)}=U_{\mathrm{prod}}(z)\ket{0}^{\otimes q}.
\end{equation}

\begin{proposition}[Closed form for a product $R_Y$ kernel]
For the product-angle map above,
\begin{equation}
 |\braket{\phi(z)}{\phi(w)}|^2
 =\prod_{j=1}^q\cos^2\!\left(\frac{\alpha(z_j-w_j)}{2}\right).
\end{equation}
\end{proposition}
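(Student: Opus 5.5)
The plan is to exploit the tensor-product structure, which reduces the kernel to a product of single-qubit overlaps. Because $U_{\mathrm{prod}}(z)=\bigotimes_j R_Y(\alpha z_j)$ acts on $\ket{0}^{\otimes q}$, the state factorizes as
\begin{equation}
 \ket{\phi(z)}=\bigotimes_{j=1}^q R_Y(\alpha z_j)\ket{0}.
\end{equation}
The inner product of tensor products is the product of the factor inner products, so
\begin{equation}
 \braket{\phi(z)}{\phi(w)}=\prod_{j=1}^q\bra{0}R_Y(\alpha z_j)^*R_Y(\alpha w_j)\ket{0}.
\end{equation}
Taking modulus squared then gives $|\braket{\phi(z)}{\phi(w)}|^2=\prod_j|\bra{0}R_Y(\alpha z_j)^*R_Y(\alpha w_j)\ket{0}|^2$. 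Equivalently, in the projector language of the paper, $P_{\phi(z)}=\bigotimes_j P_j(z_j)$, and $\tr$ is multiplicative over tensor products.

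Next, I will evaluate a single factor. I fix the standard convention
\begin{equation}
 R_Y(\beta)=\exp(-i\beta Y/2)=\begin{pmatrix}\cos(\beta/2)&-\sin(\beta/2)\\ \sin(\beta/2)&\cos(\beta/2)\end{pmatrix},
\end{equation}
which is a real rotation. This gives two facts:
\begin{itemize}[leftmargin=2em]
\item $R_Y(\beta)^*R_Y(\gamma)=R_Y(\gamma-\beta)$, from the one-parameter group property;
\item $R_Y(\beta)\ket{0}=(\cos(\beta/2),\sin(\beta/2))^T$.
\end{itemize}
The single-qubit overlap is therefore $\bra{0}R_Y(\alpha(w_j-z_j))\ket{0}=\cos(\alpha(w_j-z_j)/2)$. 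Alternatively, it is the dot product of two real unit vectors at angles $\alpha z_j/2$ and $\alpha w_j/2$, which is $\cos(\alpha(z_j-w_j)/2)$ by the cosine difference formula. Squaring and using the evenness of cosine gives each factor $\cos^2(\alpha(z_j-w_j)/2)$, and multiplying over $j$ yields the claim.

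There is no real obstacle. The only point needing care is the rotation convention: the half-angle in $R_Y$ is what produces $\alpha(z_j-w_j)/2$. With a different normalization (e.g.\ $\exp(-i\beta Y)$) the half would disappear, so the proof should state the convention explicitly. I would also note that any global-phase convention for $R_Y$ is harmless, since it cancels in the modulus squared, consistent with the global phase quotient proposition.
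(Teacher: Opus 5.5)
Your proposal is correct and follows essentially the same route as the paper: you factor the overlap over the tensor product, evaluate each single-qubit overlap as $\cos(\alpha(z_j-w_j)/2)$ via the composition law $R_Y(\beta)^*R_Y(\gamma)=R_Y(\gamma-\beta)$, square, and take the product. Stating the half-angle convention $R_Y(\beta)=\exp(-i\beta Y/2)$ explicitly is a worthwhile addition, since the paper leaves it implicit.
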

\begin{proof}
The one-qubit overlap is $\cos(\alpha(z_j-w_j)/2)$ by the rotation composition law. Tensor-product overlaps multiply, and the fidelity squares the modulus.\qedhere
\end{proof}

This kernel can be strong on toy data, but it is product structured and classically cheap to evaluate. It should therefore be read as a useful circuit baseline, not evidence of quantum advantage.

\subsection{Data re-uploading and entangling encoders}
Data re-uploading repeatedly injects classical variables into a circuit and may interleave them with trainable or fixed entangling operations \cite{perez_salinas_reuploading,schuld_encoding}. The controlled audit uses representative fixed re-uploading encoders that interleave data-dependent one-qubit rotations with a simple nearest-neighbor entangling layer. This moves beyond product states, but the audit does not train the entangling ansatz. A weaker result from a fixed entangler should not be interpreted as evidence that entanglement is harmful; it shows only that entanglement must be designed, selected, or trained carefully.

\subsection{Quantum fidelity kernels are projector kernels}
For a general encoder $U_{\mathrm{enc}}(x)$, define
\begin{equation}
 \ket{\phi(x)}=U_{\mathrm{enc}}(x)\ket{0},\qquad P_x=\ket{\phi(x)}\bra{\phi(x)}.
\end{equation}
The standard inverse-feature-map or fidelity circuit estimates
\begin{equation}
 K(x,y)=|\bra{0}U_{\mathrm{enc}}(y)^*U_{\mathrm{enc}}(x)\ket{0}|^2.
\end{equation}
Qiskit documentation describes fidelity quantum kernels through the same state-overlap principle \cite{qiskit_fidelity_kernel}. In an ideal simulator this is not merely analogous to a projector kernel; it is exactly one.

\begin{proposition}[Fidelity circuit as Hermitian projector kernel]
Let $\ket{\phi(x)},\ket{\phi(y)}\in\C^D$ be normalized and let $P_x=\ket{\phi(x)}\bra{\phi(x)}$, $P_y=\ket{\phi(y)}\bra{\phi(y)}$. Then
\begin{equation}
 |\braket{\phi(x)}{\phi(y)}|^2=\tr(P_xP_y).
\end{equation}
Consequently $K(x,y)=\tr(P_xP_y)$ is positive semidefinite as a kernel on samples.
\end{proposition}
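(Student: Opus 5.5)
The plan is to prove the trace identity by direct computation and then obtain positive semidefiniteness by exhibiting an explicit feature map into a real inner-product space.

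\textbf{Trace identity.} I will use the cyclic property of the trace on rank-one operators. Write $\phi_x=\ket{\phi(x)}$ and $\phi_y=\ket{\phi(y)}$. Then
\begin{equation}
P_xP_y=\phi_x(\phi_x^*\phi_y)\phi_y^*=\braket{\phi(x)}{\phi(y)}\,\phi_x\phi_y^*.
\end{equation}
Taking the trace and using $\tr(\phi_x\phi_y^*)=\phi_y^*\phi_x=\braket{\phi(y)}{\phi(x)}=\overline{\braket{\phi(x)}{\phi(y)}}$ gives
\begin{equation}
\tr(P_xP_y)=\braket{\phi(x)}{\phi(y)}\,\overline{\braket{\phi(x)}{\phi(y)}}=|\braket{\phi(x)}{\phi(y)}|^2 .
\end{equation}
Normalization is not needed for this identity. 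It only ensures that $P_x$ and $P_y$ are genuine rank-one projectors with $\tr(P_x)=1$. The same computation shows that the kernel depends on $\phi(x)$ only through $P_x$, which is consistent with the global phase quotient proposition.

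\textbf{Positive semidefiniteness.} The Hilbert--Schmidt pairing $\langle A,B\rangle_{HS}=\tr(A^*B)$ is an inner product on $\C^{D\times D}$. Since $P_x^*=P_x$, we have $\tr(P_xP_y)=\langle P_x,P_y\rangle_{HS}$, and this value is real by the identity above.

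For samples $x_1,\ldots,x_n$ and real coefficients $c_i$, bilinearity of the pairing gives
\begin{equation}
\sum_{i,j}c_ic_j\,\tr(P_{x_i}P_{x_j})=\Bigl\langle \sum_i c_iP_{x_i},\sum_j c_jP_{x_j}\Bigr\rangle_{HS}=\Bigl\|\sum_i c_iP_{x_i}\Bigr\|_{HS}^2\ge0 .
\end{equation}
If complex coefficients are wanted, I will instead use the sesquilinear form, written $\sum_{i,j}\bar c_i c_j\tr(P_{x_i}P_{x_j})=\|\sum_j c_jP_{x_j}\|_{HS}^2$.

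An equivalent route keeps the argument parallel to the weighted flag kernel proposition. Hermitian matrices form a real vector space, and $\tr(AB)$ is a real inner product on it; under the real isometry to $\R^{D^2}$ it becomes an ordinary Euclidean dot product. The kernel is then a Gram matrix of the feature map $x\mapsto P_x$, and hence positive semidefinite.

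\textbf{Expected obstacle.} The only point requiring care is checking that the Gram form really is real and Hermitian. This means confirming that $\tr(P_xP_y)$ is real and symmetric in $x$ and $y$, which follows from the modulus-squared form. Everything else is routine.
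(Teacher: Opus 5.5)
Your proof is correct and follows essentially the same route as the paper. For the trace identity you both use the rank-one factorization $P_xP_y=\braket{\phi(x)}{\phi(y)}\,\ket{\phi(x)}\bra{\phi(y)}$, and for positive semidefiniteness you both write $\sum_{i,j}\bar c_ic_j\tr(P_iP_j)=\|\sum_i c_iP_i\|_{\mathrm{HS}}^2\ge 0$; your extra remarks (normalization is not needed for the identity, the Gram matrix is real symmetric, and the equivalent real Hilbert--Schmidt feature-map view) are accurate but do not change the argument.
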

\begin{proof}
The trace identity follows from
\begin{equation}
 P_xP_y=\ket{\phi(x)}\braket{\phi(x)}{\phi(y)}\bra{\phi(y)}
\end{equation}
and hence
\begin{equation}
 \tr(P_xP_y)=\braket{\phi(x)}{\phi(y)}\braket{\phi(y)}{\phi(x)}.
\end{equation}
For positive semidefiniteness, let $c_1,\ldots,c_n\in\R$ or $\C$. Then
\begin{equation}
 \sum_{i,j}\bar c_i c_j\tr(P_iP_j)
 =\tr\left(\left(\sum_i c_iP_i\right)^*\left(\sum_j c_jP_j\right)\right)
 =\left\|\sum_i c_iP_i\right\|_{\mathrm{HS}}^2\ge0.
\end{equation}
\end{proof}

\section{QVR return scores as anchor geometry}
This section gives the exact dictionary for QVR-style return probabilities. The distinction between a pairwise kernel and an anchor return is central.

\subsection{Pure return equals rank-one anchor scoring}
Let $V_\theta$ be a learned or specified unitary and let the success event be return to $\ket{0}$. For an encoded sample $\ket{\phi(x)}$,
\begin{equation}
 p_\theta(x)=|\bra{0}V_\theta\ket{\phi(x)}|^2.
\end{equation}
Define
\begin{equation}
 A_\theta=V_\theta^*\ket{0}\bra{0}V_\theta.
\end{equation}
Then $A_\theta$ is a rank-one Hermitian projector and
\begin{equation}
 p_\theta(x)=\tr(P_xA_\theta).
\end{equation}
Pure QVR return is therefore a learned-anchor model on $\CP^{D-1}$, not a pairwise kernel SVM.

\begin{proposition}[Pure QVR return as projective anchor score]
For $P_x=\ket{\phi(x)}\bra{\phi(x)}$ and $A_\theta=V_\theta^*\ket{0}\bra{0}V_\theta$,
\begin{equation}
 |\bra{0}V_\theta\ket{\phi(x)}|^2=\tr(P_xA_\theta).
\end{equation}
\end{proposition}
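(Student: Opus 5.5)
The plan is to reduce the statement to the Fidelity-circuit proposition proved just above. The key observation is that $A_\theta$ is itself a rank-one Hermitian projector onto a normalized state. Set $\ket{a_\theta}=V_\theta^*\ket{0}$. Since $V_\theta$ is unitary, $\braket{a_\theta}{a_\theta}=\bra{0}V_\theta V_\theta^*\ket{0}=1$, so $\ket{a_\theta}$ is normalized. Moreover $A_\theta=V_\theta^*\ket{0}\bra{0}V_\theta=\ket{a_\theta}\bra{a_\theta}$, because the adjoint of $V_\theta^*\ket{0}$ is $\bra{0}V_\theta$.

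Next I rewrite the amplitude. We have $\bra{0}V_\theta\ket{\phi(x)}=\bigl(V_\theta^*\ket{0}\bigr)^*\ket{\phi(x)}=\braket{a_\theta}{\phi(x)}$. Hence the return probability is
\begin{equation}
|\bra{0}V_\theta\ket{\phi(x)}|^2=|\braket{a_\theta}{\phi(x)}|^2 .
\end{equation}
Applying the Fidelity-circuit proposition to the two normalized states $\ket{\phi(x)}$ and $\ket{a_\theta}$ then gives $|\braket{a_\theta}{\phi(x)}|^2=\tr(P_xA_\theta)$.

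As an alternative that does not depend on the earlier result, I would compute the trace directly using cyclicity:
\begin{equation}
\tr(P_xA_\theta)=\tr\bigl(\ket{\phi(x)}\bra{\phi(x)}V_\theta^*\ket{0}\bra{0}V_\theta\bigr)=\bra{0}V_\theta\ket{\phi(x)}\,\bra{\phi(x)}V_\theta^*\ket{0}.
\end{equation}
The second factor is the complex conjugate of the first, so the product equals $|\bra{0}V_\theta\ket{\phi(x)}|^2$.

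There is no real obstacle here. The only point requiring care is the adjoint bookkeeping, namely $(V_\theta^*\ket{0})^*=\bra{0}V_\theta$, together with using unitarity to confirm that $A_\theta$ has unit trace. Unitarity is not even needed for the identity itself; it is needed only to interpret $A_\theta$ as a point of $\CP^{D-1}$.
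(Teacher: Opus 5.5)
Your proposal is correct, and your alternative cyclicity computation is exactly the paper's proof, which collapses $\tr(P_xA_\theta)$ to $\bra{\phi(x)}V_\theta^*\ket{0}\bra{0}V_\theta\ket{\phi(x)}=|\bra{0}V_\theta\ket{\phi(x)}|^2$. Your primary route, writing $A_\theta=\ket{a_\theta}\bra{a_\theta}$ with $\ket{a_\theta}=V_\theta^*\ket{0}$ and invoking the fidelity-kernel proposition, is the same computation in repackaged form. It has the advantage of making explicit that $A_\theta$ is a rank-one projector, which the paper only asserts in the surrounding text. It has the cost of needing unitarity of $V_\theta$ and $\|\phi(x)\|_2=1$ to meet that proposition's normalization hypothesis, and the direct computation uses neither.
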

\begin{proof}
By cyclicity and the rank-one formula,
\begin{align}
\tr(P_xA_\theta)
&=\tr\left(\ket{\phi(x)}\bra{\phi(x)}V_\theta^*\ket{0}\bra{0}V_\theta\right)\\
&=\bra{\phi(x)}V_\theta^*\ket{0}\bra{0}V_\theta\ket{\phi(x)}\\
&=|\bra{0}V_\theta\ket{\phi(x)}|^2.
\end{align}
\end{proof}

\subsection{\texorpdfstring{Rank-$r$ return equals a complex Grassmann anchor}{Rank-r return equals a complex Grassmann anchor}}
A pure all-zero return event may be too restrictive if a class occupies an extended region of Hilbert space. Let $\Pi_r$ be a rank-$r$ projector onto a success subspace. Then
\begin{equation}
 p_{\theta,r}(x)=\|\Pi_rV_\theta\ket{\phi(x)}\|_2^2
 =\tr(P_xA_{\theta,r}),
 \qquad A_{\theta,r}=V_\theta^*\Pi_rV_\theta.
\end{equation}
Since $A_{\theta,r}$ is a rank-$r$ Hermitian projector, it is a point in $\Gr_{\C}(r,D)$.

\begin{proposition}[Exact QVR trace identity]
Fix an encoder $x\mapsto\ket{\psi(x)}$, a unitary $U_\theta$, and a rank-$r$ success projector $\Pi_r$. Define $P_x=\ket{\psi(x)}\bra{\psi(x)}$ and $A_\theta=U_\theta^*\Pi_rU_\theta$. Then
\begin{equation}
 \|\Pi_rU_\theta\ket{\psi(x)}\|_2^2=\tr(P_xA_\theta).
\end{equation}
If $r=1$, $A_\theta\in\CP^{D-1}\cong\Gr_{\C}(1,D)$. If $r>1$, $A_\theta\in\Gr_{\C}(r,D)$.
\end{proposition}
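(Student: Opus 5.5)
The plan is to establish the trace identity by a direct computation, and then to verify that $A_\theta$ is a rank-$r$ Hermitian projector.

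For the identity, I start from the fact that $\Pi_r$ is an orthogonal projector, so $\Pi_r^*\Pi_r=\Pi_r^2=\Pi_r$. Writing the squared norm as an inner product gives
\[
\|\Pi_rU_\theta\ket{\psi(x)}\|_2^2=\bra{\psi(x)}U_\theta^*\Pi_r^*\Pi_rU_\theta\ket{\psi(x)}=\bra{\psi(x)}A_\theta\ket{\psi(x)}.
\]
For any operator $M$, the rank-one trace rule gives $\bra{\psi}M\ket{\psi}=\tr(\ket{\psi}\bra{\psi}M)$. This is the same cyclicity step used in the proof of the pure QVR return proposition. Applying it with $M=A_\theta$ yields $\tr(P_xA_\theta)$. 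The case $r=1$ with $\Pi_1=\ket{0}\bra{0}$ recovers that earlier proposition, which serves as a consistency check.

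For the geometric membership claim, I check the defining properties of $\Gr_{\C}(r,D)$ as the paper defines it.

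\textbf{Hermitian.} $A_\theta^*=U_\theta^*\Pi_r^*U_\theta=A_\theta$.

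\textbf{Idempotent.} $A_\theta^2=U_\theta^*\Pi_rU_\theta U_\theta^*\Pi_rU_\theta=U_\theta^*\Pi_r^2U_\theta=A_\theta$, using unitarity $U_\theta U_\theta^*=I$.

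\textbf{Rank.} Conjugation by an invertible matrix preserves rank, so $\rank(A_\theta)=\rank(\Pi_r)=r$. Equivalently, $\tr(A_\theta)=\tr(\Pi_r)=r$ by cyclicity, and for projectors the trace equals the rank.

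To match the parametrized form $\{VV^*:V^*V=I_r\}$, I write $\Pi_r=WW^*$ with $W\in\C^{D\times r}$ having orthonormal columns spanning the range of $\Pi_r$. Setting $V=U_\theta^*W$ gives $V^*V=W^*U_\theta U_\theta^*W=I_r$ and $VV^*=A_\theta$. Hence $A_\theta\in\Gr_{\C}(r,D)$, and for $r=1$ it lies in $\CP^{D-1}\cong\Gr_{\C}(1,D)$.

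There is no real obstacle here. The only points needing care are making explicit that $\Pi_r$ is an \emph{orthogonal} (self-adjoint) projector, so that $\Pi_r^*\Pi_r=\Pi_r$, and that $U_\theta$ is unitary rather than merely invertible, since that is what makes the factorization $V=U_\theta^*W$ have orthonormal columns.
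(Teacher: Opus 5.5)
Your proof is correct and follows the paper's route. You expand the squared norm as $\bra{\psi}U_\theta^*\Pi_rU_\theta\ket{\psi}$ (the paper cites Born's rule, while you derive it from $\Pi_r^*\Pi_r=\Pi_r$), convert it to $\tr(P_xA_\theta)$, and observe that a unitary conjugate of a rank-$r$ orthogonal projector is again one; your explicit factorization $V=U_\theta^*W$ just spells out the membership in $\{VV^*:V^*V=I_r\}$ that the paper leaves implicit.
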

\begin{proof}
Born's rule gives \cite{nielsen_chuang}
\begin{equation}
 \|\Pi_rU_\theta\ket{\psi}\|_2^2
 =\bra{\psi}U_\theta^*\Pi_rU_\theta\ket{\psi}.
\end{equation}
Writing $P=\ket{\psi}\bra{\psi}$ gives $\bra{\psi}A_\theta\ket{\psi}=\tr(PA_\theta)$. Since $A_\theta$ is a unitary conjugate of a rank-$r$ projector, it is Hermitian, idempotent, and rank $r$.\qedhere
\end{proof}

\subsection{Density and mixed anchors}
If a class is noisy, multimodal, or intentionally mixed, a projector anchor may still be too rigid. A class density center can be estimated as
\begin{equation}
 \rho_c=\frac{1}{n_c}\sum_{i:y_i=c}P_i,
 \qquad \rho_c\succeq0,
 \qquad \tr(\rho_c)=1.
\end{equation}
A pure sample can be scored by
\begin{equation}
 s_c(x)=\tr(P_x\rho_c)=\bra{\phi(x)}\rho_c\ket{\phi(x)}.
\end{equation}
Under the squared-Uhlmann-fidelity convention, this is also the fidelity between the pure state $P_x$ and the mixed state $\rho_c$. We will call it a density-center overlap unless the fidelity convention is explicitly specified.

\subsection{\texorpdfstring{Optimal unrestricted rank-$r$ anchor}{Optimal unrestricted rank-r anchor}}
For a class density $\rho_c$, the best unrestricted rank-$r$ projector anchor for maximizing average class return is obtained from the top eigenvectors.

\begin{proposition}[Ky Fan rank-$r$ anchor]
Let $\rho\succeq0$ be Hermitian with eigenvalues $\lambda_1\ge\cdots\ge\lambda_D$ and eigenvectors $v_1,\ldots,v_D$. Among all rank-$r$ Hermitian projectors $A$,
\begin{equation}
 \max_{A=A^*=A^2,\ \tr(A)=r}\tr(\rho A)=\sum_{j=1}^{r}\lambda_j.
\end{equation}
The maximizer is $A=V_rV_r^*$ for $V_r=[v_1,\ldots,v_r]$, up to eigenspace degeneracy.
\end{proposition}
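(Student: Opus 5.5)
I will prove the Ky Fan bound by diagonalizing $\rho$ and reducing the problem to a linear optimization over the diagonal of $A$ in the eigenbasis. Write $\rho=\sum_{j=1}^D\lambda_j v_jv_j^*$ with $\{v_j\}$ orthonormal. For any rank-$r$ Hermitian projector $A$, set $a_j=v_j^*Av_j=\|Av_j\|_2^2$. The second equality uses $A=A^*=A^2$. Then
\begin{equation}
 \tr(\rho A)=\sum_{j=1}^D\lambda_j a_j .
\end{equation}
The key constraints are $0\le a_j\le 1$, since $\|Av_j\|\le\|v_j\|=1$ for an orthogonal projector, and $\sum_j a_j=\tr(A)=r$, since the trace is basis independent. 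So $\tr(\rho A)$ is bounded by the maximum of the linear functional $a\mapsto\sum_j\lambda_ja_j$ over this polytope.

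The main step is the elementary rearrangement bound. For $0\le a_j\le1$ with $\sum_j a_j=r$ and $\lambda$ sorted decreasingly, I will show
\begin{equation}
 \sum_{j=1}^D\lambda_ja_j-\sum_{j=1}^r\lambda_j=\sum_{j\le r}\lambda_j(a_j-1)+\sum_{j>r}\lambda_ja_j\le\lambda_r\Bigl(\sum_{j\le r}(a_j-1)+\sum_{j>r}a_j\Bigr)=0 .
 \end{equation}
The inequality holds because $a_j-1\le0$ and $\lambda_j\ge\lambda_r$ for $j\le r$, while $a_j\ge0$ and $\lambda_j\le\lambda_r$ for $j>r$. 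This gives $\tr(\rho A)\le\sum_{j\le r}\lambda_j$. For attainment, take $A=V_rV_r^*$, which is a rank-$r$ Hermitian projector since $V_r^*V_r=I_r$. Then $a_j=1$ for $j\le r$ and $a_j=0$ otherwise, so equality holds. Positive semidefiniteness of $\rho$ is not actually needed; Hermiticity suffices.

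For the ``up to eigenspace degeneracy'' clause, I will read off the equality cases from the displayed inequality. Equality forces $a_j=1$ whenever $\lambda_j>\lambda_r$ and $a_j=0$ whenever $\lambda_j<\lambda_r$. Now $a_j=1$ means $v_j\in\operatorname{ran}A$, and $a_j=0$ means $v_j\in\ker A$. Hence $A$ contains the eigenspace for eigenvalues strictly above $\lambda_r$ and is orthogonal to the eigenspace for eigenvalues strictly below $\lambda_r$. So $A$ differs from $V_rV_r^*$ only by the choice of subspace inside the $\lambda_r$-eigenspace, and the maximizer is unique exactly when $\lambda_r>\lambda_{r+1}$.

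I expect the only delicate step to be the equality analysis under ties. Everything else is routine once the constraints $0\le a_j\le1$ and $\sum_j a_j=r$ are established, and the only place the projector structure of $A$ enters is the bound $a_j\le 1$.
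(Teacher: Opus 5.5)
Your proposal is correct and follows the same route as the paper: diagonalize $\rho$, note that the diagonal entries of $A$ in the eigenbasis (the paper's $B_{jj}$, your $a_j$) lie in $[0,1]$ and sum to $r$, and bound the resulting linear functional. You go further than the paper by writing out the rearrangement inequality and the equality-case analysis behind ``up to eigenspace degeneracy,'' which the paper only asserts, and both are correct.
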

\begin{proof}
This is the Ky Fan maximum principle \cite{horn_johnson_matrix_analysis}. A direct proof diagonalizes $\rho=Q\Lambda Q^*$ and writes $B=Q^*AQ$. Then $B$ is a rank-$r$ projector with diagonal entries in $[0,1]$ summing to $r$, and
\begin{equation}
 \tr(\rho A)=\tr(\Lambda B)=\sum_j\lambda_jB_{jj}\le\sum_{j=1}^{r}\lambda_j.
\end{equation}
Equality is attained by projection onto the top eigenspace.\qedhere
\end{proof}

This result explains why rank-one QVR-style anchors may underperform on multiclass data: a class may not be well represented by one line in projective space.

\subsection{Quantum-kernel SVMs are different classifiers}
A pairwise quantum-kernel SVM has decision function
\begin{equation}
 f(x)=\sum_{i\in\mathcal S}\alpha_i y_i |\braket{\psi_i}{\psi(x)}|^2+b
 =\tr(P_xB)+b,
 \qquad B=\sum_{i\in\mathcal S}\alpha_i y_iP_i.
\end{equation}
The operator $B$ is Hermitian but generally neither PSD nor a projector. Thus the exact geometric counterpart of a quantum-kernel SVM is an affine Hermitian decision operator in the span of training projectors, not a single Grassmann anchor. This is why a quantum-kernel SVM may outperform a pure QVR-style return model without contradicting the QVR trace identity.

\section{Experimental protocol and validation discipline}
The numerical study uses controlled experiments on scikit-learn toy datasets, synthetic invariance witnesses, a small synthetic time-series anomaly setting, and NumPy statevector circuits. The purpose is diagnostic: to test invariance behavior, algebraic trace identities, and finite-shot observation effects. The experiments are not hardware claims, not quantum-advantage claims, and not operational benchmarks.

\subsection{Datasets}
The audit uses iris, wine, breast cancer, digits, and diabetes. The scikit-learn documentation describes the built-in toy datasets as useful small examples rather than large real-world ML benchmarks \cite{sklearn_toy}. Classification is reported using balanced accuracy, and diabetes regression using $R^2$.

The native anomaly audit uses a deterministic synthetic time-series generator. Normal windows are short noisy sinusoidal segments with small phase, frequency, amplitude, and trend variation. Anomalies include local spikes, step changes, frequency/shape deviations, and amplitude/radius deviations. The window length is 16, so amplitude encoding uses four qubits without dimensional compression. The training set contains normal windows only, while validation and test sets contain stratified normal/anomaly mixtures.

\subsection{Native QVR-style anomaly protocol}
For each split of the time-series audit, a normal anchor is estimated from training-normal states only. Candidate anchor ranks are selected on the validation mixture using ROC--AUC, with PR--AUC used as a secondary tie-breaker and smaller rank as the final tie-breaker. A detection threshold is selected on validation by maximizing F1. The held-out test set then reports ROC--AUC, PR--AUC, threshold F1, expected calibration error (ECE) using ten bins, the maximum statevector/trace score gap, and prediction mismatches. The finite-shot layer samples a binomial success count from the exact return probability at 128, 512, 1024, and 4096 shots. The simple noise observation reports a 1024-shot score after a depolarizing interpolation of 0.03 and a binary success/failure readout flip probability of 0.02. This is a transparent observation model rather than a hardware-backend model.

\subsection{Train/test discipline}
The intended protocol is split-first. All preprocessing that learns from data, including scaling and PCA, must be fit on the training fold only and applied to validation/test folds without refitting. Model choices should be made using training/validation data, not held-out test data; this follows standard scikit-learn leakage guidance \cite{sklearn_pitfalls}. Where a table displays behavior across preprocessing or encoder variants, the table should be read together with the described validation-selection protocol; otherwise, it should be treated as a post-hoc behavior summary rather than a benchmark-selected performance estimate.

\subsection{Methods compared}
The audit compares:
\begin{enumerate}[leftmargin=2em]
\item classical vector and RBF baselines;
\item real geometric baselines, including Grassmann, semantic flag, image-patch flag, covariance-density, covariance-Grassmann, magnitude-preserving PSD, pure-state fidelity, phase-state, and Fourier/Hermitian state lifts;
\item amplitude fidelity kernels, product-angle quantum kernels, entangled reuploading kernels, and finite-shot approximations;
\item pure-center QVR-style classwise anchors, rank-$r$ subspace-return anchors, and density-center overlap scorers.
\end{enumerate}
No external quantum SDK or hardware backend is claimed in these experiments. The quantum circuit layer is a statevector-level mathematical audit.

\section{Controlled simulation results}
This section reports controlled simulation results for the representation families above. The numerical tables are intended as diagnostic summaries of the mathematical definitions and deterministic split protocols described in the manuscript.

\subsection{Real vector, manifold, and non-vector lift baselines}
Table~\ref{tab:real-nonvector} reports native real/vector and non-vector lift baselines under the same split-first nested selection discipline used for the article-level toy-data audit. The message is not that geometric or quantum-state lifts dominate classical baselines. Rather, vector RBF remains strong, while non-vector lifts become useful when their invariances match the data object. In the retained classification summaries, the vector RBF baseline is also the best overall article-level model on all four datasets; no separate duplicate column is shown.

\begin{table}[H]
\centering
\caption{Real vector, manifold, and non-vector lift baselines. Classification values are balanced accuracy mean $\pm$ standard deviation over three deterministic outer splits, with preprocessing and SVM hyperparameters selected on the inner validation split. In these retained summaries, the vector RBF baseline is the best overall article-level model for all four datasets; the table therefore reports it directly rather than repeating an identical ``best overall'' column.}
\label{tab:real-nonvector}
\small
\resizebox{\textwidth}{!}{%
\begin{tabular}{lcccc}
\toprule
Dataset & Vector RBF baseline & Best real manifold & Best non-vector & Non-vector family\\
\midrule
Breast cancer & 0.9613 $\pm$ 0.0086 & 0.9449 $\pm$ 0.0055 & 0.9449 $\pm$ 0.0055 & covariance density\\
Digits & 0.9888 $\pm$ 0.0057 & 0.9850 $\pm$ 0.0020 & 0.9866 $\pm$ 0.0054 & magnitude-preserving PSD\\
Iris & 0.9626 $\pm$ 0.0264 & 0.9492 $\pm$ 0.0326 & 0.9492 $\pm$ 0.0326 & semantic flag + radius\\
Wine & 0.9911 $\pm$ 0.0063 & 0.9805 $\pm$ 0.0070 & 0.9805 $\pm$ 0.0070 & covariance density\\
\bottomrule
\end{tabular}%
}
\end{table}

\subsection{Real-geometry guardrail diagnostics}
Table~\ref{tab:real-guardrails} reports selected real-geometry diagnostics used to check the representation claims. The spectral-activation row is a deliberately negative result: applying a monotone scalar activation to exact projector eigenvalues and then selecting the top-$p$ eigenspace does not create a nontrivial Grassmann nonlinearity. The congruence and flag rows confirm that orthonormal structure-preserving maps preserve projection geometry rather than improving discrimination by themselves. The reversible row is a conditioning diagnostic, not a universal guarantee for unconstrained training.

\begin{table}[H]
\centering
\caption{Selected real-geometry guardrail diagnostics from the controlled audit. Smaller errors are better. The values are numerical checks of the mathematical claims, not benchmark scores.}
\label{tab:real-guardrails}
\small
\begin{tabular}{lr}
\toprule
Diagnostic & Value \\
\midrule
Spectral activation identity error on exact projectors & $1.897\times10^{-6}$ \\
Orthonormal Grassmann congruence distance error & $5.960\times10^{-8}$ \\
Orthonormal flag feature-distance error & $0.000$ \\
Full-rank reversible map mean reconstruction distance & $0.000$ \\
Reversible map minimum singular value & $0.750$ \\
Reversible map condition number & $1.800$ \\
\bottomrule
\end{tabular}
\end{table}

The spectral-activation identity error is larger than the other roundoff-scale diagnostics because it is evaluated through an eigendecomposition and top-$p$ reprojection. It should be read as a numerical tolerance observation, not as a failure of Proposition~4.

\subsection{Flag block-swap witness and negative control}
Table~\ref{tab:flag-swap} gives the real-valued counterpart of the rank-$r$ anchor story. In the same-span block-swap task, both classes are constructed to have matching whole-span projectors, while the ordered block assignment carries the label. Whole-span Grassmann classifiers are therefore near chance, exactly as the obstruction proposition predicts, whereas flag models retain the ordered block projectors and separate the classes. The noisy coarse/fine control has the opposite role: whole-span information is already discriminative, so flag structure is not necessary.

\begin{table}[H]
\centering
\caption{Flag positive control and negative control. Values are test accuracy percentages, mean $\pm$ standard deviation over three seeds.}
\label{tab:flag-swap}
\scriptsize
\resizebox{\textwidth}{!}{%
\begin{tabular}{lrrrrrr}
\toprule
Dataset & Whole-span logistic & Whole-span RBF SVM & Flag logistic & Flag MLP & Deep flag & Flag RBF SVM \\
\midrule
Same-span block swap & $50.2\pm0.6$ & $48.1\pm1.8$ & $100.0\pm0.0$ & $100.0\pm0.0$ & $100.0\pm0.0$ & $100.0\pm0.0$ \\
Noisy coarse/fine control & $100.0\pm0.0$ & $100.0\pm0.0$ & $100.0\pm0.0$ & $100.0\pm0.0$ & $100.0\pm0.0$ & $100.0\pm0.0$ \\
\bottomrule
\end{tabular}%
}
\end{table}

\subsection{Quantum-circuit/statevector comparison}
Table~\ref{tab:quantum-circuit} reports statevector quantum-circuit comparisons. Product-angle kernels are strong on several small tabular datasets, but because the product kernel has a closed form, this should be read as a feature-map result rather than quantum advantage. The fixed entangled reuploading circuit is weaker in this first audit, indicating that entanglement must be trained or designed rather than added casually.

\begin{table}[H]
\centering
\caption{Quantum-circuit and Hermitian-state comparison on classification datasets. Values are balanced accuracy mean $\pm$ standard deviation over three deterministic splits.}
\label{tab:quantum-circuit}
\scriptsize
\begin{tabular}{llcccc}
\toprule
Dataset & Method & Qubits & Hilbert dim. & Preprocess & Balanced accuracy\\
\midrule
Iris & amplitude fidelity kernel & 2 & 4 & zscore & 0.7495 $\pm$ 0.0695\\
Iris & product-angle quantum kernel & 4 & 16 & raw/zscore & 0.9749 $\pm$ 0.0217\\
Iris & entangled reuploading kernel & 4 & 16 & raw/zscore & 0.9561 $\pm$ 0.0295\\
Iris & entangled reuploading, 1024 shots & 4 & 16 & raw & 0.9622 $\pm$ 0.0381\\
\midrule
Wine & amplitude fidelity kernel & 4 & 16 & zscore & 0.8700 $\pm$ 0.0263\\
Wine & product-angle quantum kernel & 6 & 64 & zscore & 0.9858 $\pm$ 0.0134\\
Wine & entangled reuploading kernel & 6 & 64 & zscore & 0.9319 $\pm$ 0.0300\\
Wine & entangled reuploading, 1024 shots & 6 & 64 & zscore & 0.9319 $\pm$ 0.0300$^{\dagger}$\\
\midrule
Breast cancer & amplitude fidelity kernel & 5 & 32 & raw & 0.9009 $\pm$ 0.0159\\
Breast cancer & product-angle quantum kernel & 6 & 64 & zscore & 0.9542 $\pm$ 0.0047\\
Breast cancer & entangled reuploading kernel & 6 & 64 & raw & 0.9320 $\pm$ 0.0220\\
Breast cancer & entangled reuploading, 1024 shots & 6 & 64 & raw & 0.9138 $\pm$ 0.0326\\
\midrule
Digits & amplitude fidelity kernel & 6 & 64 & raw & 0.9850 $\pm$ 0.0033\\
Digits & product-angle quantum kernel & 6 & 64 & raw & 0.9346 $\pm$ 0.0103\\
Digits & entangled reuploading kernel & 6 & 64 & raw & 0.9182 $\pm$ 0.0077\\
Digits & entangled reuploading, 1024 shots & 6 & 64 & raw & 0.9049 $\pm$ 0.0039\\
\bottomrule
\end{tabular}
\end{table}

{\footnotesize\noindent$^\dagger$ The exact and 1024-shot wine rows are identical at the printed precision in the aggregate summary. This means that the rounded balanced-accuracy outcomes did not change across the reported splits; it does not imply that all finite-shot kernel entries were identical. In the preprocess column, ``raw/zscore'' means that the aggregate summary involves both preprocessing regimes across the split-level selection or tie pattern, while a single entry means the listed regime is used for the retained row.\par}

Amplitude fidelity performs especially well on digits because each $8\times8$ image has 64 pixels and therefore fits directly into six-qubit amplitude encoding. This is a representation match, not a general law.

\subsection{QVR-style anchors and geometric extensions}
Table~\ref{tab:qvr-geometry} is the central QVR-style model-family audit. A pure-center anchor is too small for ordinary multiclass toy classification. Rank-$r$ subspace anchors improve performance, and density centers improve further on several datasets. Pairwise quantum-kernel SVMs can be stronger because they are not single-anchor return models. These rows are performance summaries for the classwise anchor and density protocols; they are not the same constructed score as the later trace-consistency probe in Table~\ref{tab:trace-consistency}.

\begin{table}[H]
\centering
\caption{QVR-style rewinding and geometric extensions. Values are balanced accuracy over three deterministic splits. Pure center is a classwise adaptation of QVR return scoring; rank-$r$ subspace is the Grassmann-anchor extension; density center is a mixed-state extension. These values summarize the classwise anchor/density protocols and should not be interpreted as the same constructed amplitude-encoder trace probe reported in Table~\ref{tab:trace-consistency}.}
\label{tab:qvr-geometry}
\small
\begin{tabular}{lccc}
\toprule
Dataset & Pure-center QVR & Rank-$r$ subspace return & Density-center overlap\\
\midrule
Iris & 0.5505 $\pm$ 0.0276 & 0.6002 $\pm$ 0.0681 & 0.9361 $\pm$ 0.0116\\
Wine & 0.5963 $\pm$ 0.0380 & 0.7291 $\pm$ 0.0874 & 0.9357 $\pm$ 0.0232\\
Breast cancer & 0.7409 $\pm$ 0.0278 & 0.8636 $\pm$ 0.0155 & 0.8609 $\pm$ 0.0028\\
Digits & 0.3189 $\pm$ 0.0121 & 0.5850 $\pm$ 0.0065 & 0.8148 $\pm$ 0.0050\\
\bottomrule
\end{tabular}
\end{table}

The natural extension path is
\begin{equation}
\begin{aligned}
 \text{pure anchor }A_c\in\CP^{D-1}
 &\quad\longrightarrow\quad
 \text{rank-}r\text{ anchor }A_c\in\Gr_{\C}(r,D)\\
 &\quad\longrightarrow\quad
 \text{density anchor }\rho_c\succeq0,\quad \tr(\rho_c)=1.
\end{aligned}
\end{equation}
A supervised circuit or classifier can then be trained on top of this representation, but trainable-circuit architecture design is outside the scope of the present invariance audit.

\subsection{Native time-series anomaly audit}
Table~\ref{tab:timeseries-anomaly} addresses the main reviewer concern about the earlier supervised QVR adaptation: it adds a native anomaly-detection setting. The exact row evaluates the same rank-selected normal-return score as a statevector return probability and as a Hermitian trace score. The maximum score gap is at numerical roundoff and there are no prediction mismatches. Finite-shot rows show that the ranking and thresholded F1 are stable at practical shot counts in this controlled setting, while the mismatch count decreases as shots increase.

\begin{table}[H]
\centering
\caption{Native synthetic time-series anomaly audit. Values are mean $\pm$ standard deviation over ten deterministic splits. The mismatch count is the total number of thresholded predictions, out of 4390 test windows across all splits, that differ from the exact trace-score prediction. The noise row uses 1024 shots, depolarizing interpolation $0.03$, and binary readout flip probability $0.02$.}
\label{tab:timeseries-anomaly}
\scriptsize
\resizebox{\textwidth}{!}{%
\begin{tabular}{lccccccc}
\toprule
Observation model & Shots & ROC--AUC & PR--AUC & Threshold F1 & ECE & Max trace/QVR gap & Mismatches\\
\midrule
Exact trace/statevector return & -- & 0.864 $\pm$ 0.016 & 0.827 $\pm$ 0.018 & 0.840 $\pm$ 0.012 & 0.231 $\pm$ 0.014 & $1.3\times10^{-15}$ & 0\\
Finite-shot return & 128 & 0.861 $\pm$ 0.016 & 0.818 $\pm$ 0.019 & 0.838 $\pm$ 0.017 & 0.231 $\pm$ 0.014 & $1.3\times10^{-15}$ & 110\\
Finite-shot return & 512 & 0.865 $\pm$ 0.017 & 0.825 $\pm$ 0.021 & 0.837 $\pm$ 0.013 & 0.230 $\pm$ 0.014 & $1.3\times10^{-15}$ & 61\\
Finite-shot return & 1024 & 0.864 $\pm$ 0.017 & 0.824 $\pm$ 0.020 & 0.838 $\pm$ 0.013 & 0.231 $\pm$ 0.014 & $1.3\times10^{-15}$ & 45\\
Finite-shot return & 4096 & 0.865 $\pm$ 0.016 & 0.827 $\pm$ 0.020 & 0.840 $\pm$ 0.013 & 0.231 $\pm$ 0.014 & $1.3\times10^{-15}$ & 25\\
Finite-shot + simple noise/readout & 1024 & 0.863 $\pm$ 0.015 & 0.823 $\pm$ 0.015 & 0.838 $\pm$ 0.012 & 0.226 $\pm$ 0.013 & $1.3\times10^{-15}$ & 49\\
\bottomrule
\end{tabular}%
}
\end{table}

The anomaly audit is intentionally modest. It does not claim to reproduce any external QVR benchmark. Its role is to show that the Hermitian trace identity also holds in the native one-class/anomaly setting for which return-style models are designed, and that finite-shot statistics can be layered onto the same geometric score without changing the underlying identity. The ECE values are reported rather than hidden; they indicate that raw return scores are ranking and validation-threshold scores in this experiment, not calibrated posterior probabilities.

\subsection{Regression on diabetes}
Diabetes is a regression task, so QVR-style anomaly rewinding is not directly applicable. Table~\ref{tab:diabetes-quantum} compares quantum kernels through kernel ridge regression.

\begin{table}[H]
\centering
\caption{Diabetes regression with quantum kernels. Values are $R^2$ mean $\pm$ standard deviation over three deterministic splits. No vector RBF regression value is reported in this table.}
\label{tab:diabetes-quantum}
\small
\begin{tabular}{lcc}
\toprule
Method & Preprocess & $R^2$\\
\midrule
Product-angle quantum kernel & zscore & 0.4050 $\pm$ 0.0212\\
Product-angle quantum kernel & raw & 0.3405 $\pm$ 0.0418\\
Amplitude fidelity kernel & raw & 0.3395 $\pm$ 0.0533\\
Entangled reuploading kernel & zscore & 0.2661 $\pm$ 0.0312\\
Entangled reuploading kernel & raw & 0.2079 $\pm$ 0.0431\\
Amplitude fidelity kernel & zscore & 0.0105 $\pm$ 0.0158\\
\bottomrule
\end{tabular}
\end{table}

This table reports only the quantum-kernel regression variants retained for the article-level audit. Because no vector RBF regression baseline value is reported in this table, the manuscript does not make a numerical RBF-baseline comparison for diabetes. The narrower conclusion is still negative and important: among the tested quantum-kernel variants, quantum notation alone does not guarantee improved regression; the encoder must match useful structure in the data.

\subsection{Synthetic invariance witnesses}
Synthetic witnesses isolate invariances that real datasets mix together.

\begin{table}[H]
\centering
\caption{Synthetic invariance witnesses. These are sanity checks, not benchmark claims.}
\label{tab:synthetic}
\small
\resizebox{\textwidth}{!}{%
\begin{tabular}{lccc}
\toprule
Witness & Projector/state branch & Non-invariant branch & Interpretation\\
\midrule
Global phase as label & 0.5000 & 0.9854 & projective quotient should fail\\
Relative phase as label & 0.9894 & 0.9894 & relative phase survives in $xx^*$\\
Radius as label & 0.5036 & 0.9964 / 1.0000 & unit state erases radius; radius/PSD recover\\
\bottomrule
\end{tabular}%
}
\end{table}

These are quotient-audit tests. A model should fail when the label violates its imposed invariance.

\subsection{Trace-consistency audit for QVR-style returns}
The trace-consistency audit computes one fixed constructed rank-$r$ anchor score twice: once as a statevector return probability and once as $\tr(P_xA)$. Table~\ref{tab:trace-consistency} reports roundoff-level score gaps and zero prediction mismatches in that controlled algebraic audit. This audit is not a repeat of the classwise QVR-style performance study in Table~\ref{tab:qvr-geometry}: Table~\ref{tab:trace-consistency} fixes its own encoder, rank, anchor construction, and measurement so that statevector and trace implementations of the same score can be compared directly. The Kernel SVM, Pure QVR, and Density center columns are included only as contextual reference values from the earlier tables.

\begin{table}[H]
\centering
\caption{Trace-consistency audit for QVR-style returns. The trace-score column is the balanced accuracy of one constructed anchor score computed both as a statevector return probability and as a Hermitian trace score under the listed encoder/rank. Context columns reproduce earlier model-family summaries and are not scores for the same constructed anchor.}
\label{tab:trace-consistency}
\resizebox{\textwidth}{!}{%
\begin{tabular}{lllllll}
\toprule
Dataset & Trace score in this audit & Trace encoder/rank & Max score gap & Kernel SVM context & Pure QVR context & Density context\\
\midrule
iris & 0.9753 & amplitude, $r=1$ & $1.2\times10^{-15}$ & 0.9749 & 0.5505 & 0.9361\\
wine & 0.8505 & amplitude, $r=8$ & $1.0\times10^{-15}$ & 0.9858 & 0.5963 & 0.9357\\
breast cancer & 0.8378 & amplitude, $r=1$ & $4.0\times10^{-15}$ & 0.9542 & 0.7409 & 0.8609\\
digits & 0.9819 & amplitude, $r=8$ & $4.3\times10^{-15}$ & 0.9346 & 0.3189 & 0.8148\\
\bottomrule
\end{tabular}%
}
\end{table}

The table has two lessons. First, the statevector return and the Hermitian trace score agree when encoder, unitary, anchor, and measurement agree. Second, the gap between the trace-score column and the contextual Pure QVR column is not a contradiction: the columns correspond to different model constructions. A pairwise kernel SVM is a many-anchor margin model, while a pure QVR-style return is a single-anchor return model.

\section{Model comparison}
Table~\ref{tab:aspect-comparison} summarizes the conceptual differences.

\begin{table}[H]
\centering
\caption{Conceptual comparison of classical, geometric, quantum-kernel, and QVR models.}
\label{tab:aspect-comparison}
\scriptsize
\setlength{\tabcolsep}{3pt}
\renewcommand{\arraystretch}{1.2}
\begin{tabular}{>{\raggedright\arraybackslash}p{0.15\textwidth} >{\raggedright\arraybackslash}p{0.18\textwidth} >{\raggedright\arraybackslash}p{0.21\textwidth} >{\raggedright\arraybackslash}p{0.20\textwidth} >{\raggedright\arraybackslash}p{0.20\textwidth}}
\toprule
Aspect & \shortstack[l]{Vector/RBF\\baseline} & \shortstack[l]{Projector/\\Grassmann/flag\\geometry} & \shortstack[l]{Quantum\\fidelity\\kernel} & \shortstack[l]{QVR-style\\return}\\
\midrule
Data object & $x\in\R^d$ & $P$, $(P_b)$, $\rho$, or subspace & $\ket{\phi(x)}=U(x)\ket{0}$ and $P_x$ & $P_x$ plus pulled-back anchor $A_\theta$\\
Main score & Euclidean distance or RBF & $\tr(P_xP_y)$ or block sums & $|\langle\phi(x)|\phi(y)\rangle|^2$ & $\tr(P_xA_\theta)$\\
Normalization & Optional preprocessing & Explicit invariance choice & Required for pure states & Required for pure states\\
Phase handling & Usually none & Hermitian projectors remove global phase & Same as Hermitian projectors & Same, plus learned unitary\\
Learns boundary? & Yes, via SVM/KRR & Yes, via kernel learner & Yes, via kernel learner & Original form learns a return model\\
Exact geometry & Euclidean/RKHS & Grassmann/flag/density & $\CP^{D-1}$ projector kernel & Anchor in $\CP^{D-1}$, $\Gr_{\C}(r,D)$, or density cone\\
Hardware effects & None & None & Shots/noise if run on device & Shots/noise plus variational training\\
Gate complexity & Not applicable & Not applicable & Encoder dependent & Encoder and ansatz dependent\\
\bottomrule
\end{tabular}
\end{table}

\section{Resource accounting without hardware claims}
The trace identities have a practical resource implication, but this paper reports it only as accounting. No runtime decision rule, backend policy, or execution threshold is proposed. Table~\ref{tab:resource-accounting} summarizes which cost is being avoided or simulated in the controlled study.

\begin{table}[H]
\centering
\caption{Neutral resource accounting for the audited quantities. Here $N$ is the number of samples, $C$ the number of class or anomaly anchors, $D$ the Hilbert dimension, $r$ an anchor rank, and $S$ the number of shots. The table is descriptive and should not be read as a hardware-speedup claim.}
\label{tab:resource-accounting}
\footnotesize
\begin{tabularx}{\textwidth}{p{0.18\textwidth}p{0.23\textwidth}p{0.30\textwidth}X}
\toprule
Audited quantity & Circuit-style observation & Hermitian/geometric counterpart & Main interpretation\\
\midrule
Fidelity kernel & $O(N^2S)$ overlap observations & $O(N^2D)$ dense state inner products; closed form for separable maps & circuit estimate equals projector trace under ideal assumptions\\
Product-angle kernel & inverse-test or SWAP-test observations & $O(N^2q)$ closed form for separable $R_Y$ map & strong toy baseline can be classically cheap\\
Rank-$r$ return score & $O(NCS)$ success/failure observations & $O(NCDr)$ subspace overlaps after anchors are fixed & return probability equals anchor overlap\\
Density-center score & measurement or tomography-dependent estimate & dense density overlap, lower with factorized storage & mixed class models generalize pure anchors\\
Finite-shot audit & binomial observation noise & same trace score plus binomial sampling layer & shot uncertainty changes estimates, not the identity\\
\bottomrule
\end{tabularx}
\end{table}

\section{Why a reported QVR result could differ from a manifold result}
If a QVR implementation outperforms a simple manifold implementation, the trace identity is not contradicted. At least one modeling component differs.
\begin{enumerate}[leftmargin=2em]
\item \textbf{Different task.} Original QVR is one-class or anomaly-oriented; the present audit is supervised multiclass classification and regression.
\item \textbf{Different encoder.} Amplitude, phase, angle, time-evolution, and data-reuploading encoders define different state manifolds $\{P_x\}$.
\item \textbf{Different classifier.} A quantum-kernel SVM is a many-anchor margin model; pure QVR is a return-to-anchor model.
\item \textbf{Different observable.} Rank-one, rank-$r$, PSD, density, and general Hermitian observables are different geometric readouts.
\item \textbf{Different optimization budget.} A variational QVR unitary may train a circuit orbit, whereas an unrestricted Grassmann anchor is an upper-envelope model and a simple empirical anchor is a lower-cost baseline.
\item \textbf{Shots and noise.} Hardware QVR includes finite-shot statistics, transpilation, decoherence, readout error, and mitigation. A fair geometric counterpart must include the same observation model before hardware-level claims are compared.
\end{enumerate}

\section{Limitations}
The current manuscript has deliberately conservative boundaries.
\begin{enumerate}[leftmargin=2em]
\item The reported quantum experiments are statevector simulations, not hardware executions.
\item The multiclass QVR table is a supervised classwise adaptation of return scoring. A separate synthetic anomaly audit is included, but the paper still does not reproduce an external QVR time-series benchmark or hardware experiment.
\item Qubits are capped at six for the angle-circuit audit, causing PCA compression for high-dimensional tabular data.
\item The fixed entangling circuit is not trained.
\item The toy datasets are controlled witnesses, not real-world operational benchmarks.
\item Every table should be interpreted in light of the deterministic split descriptions, validation-selection discipline, and aggregate table summaries described in the manuscript.
\item Any table that displays best behavior across preprocessing or encoder variants must either be backed by validation-selection logs or described as post-hoc behavior analysis rather than a benchmark-selected estimate.
\end{enumerate}

\section{Conclusion}
The real-to-quantum geometry relationship is precise. The taxonomy developed here places common machine-learning and quantum-machine-learning scores on a common geometric axis while preserving their differences. A whole-span Grassmann projector removes ordered block allocation, while a flag projector retains it. A noiseless quantum fidelity kernel is a Hermitian projector kernel. A pure QVR return probability is a learned rank-one anchor score. A rank-$r$ return event is a complex Grassmann anchor score. A density-center score is a mixed-state extension. A quantum-kernel SVM is an affine Hermitian margin model rather than a single anchor.

The empirical lesson is not that quantum circuits or geometric lifts dominate classical baselines universally. They do not. The lesson is that normalization, projectivization, subspace projection, flag decomposition, density averaging, and circuit encoding are invariance assumptions. They help when those invariances match the task and fail, correctly, when the discarded information is label-bearing. The same-span flag witness and the native anomaly audit are two sides of the same message: representations should be chosen because their quotient geometry matches the scientific object. A useful external follow-up is a matched QVR--geometry study on public real time-series anomaly data, using the same encoder, return measurement, shot model, noise model, and training budget in both the circuit implementation and its Hermitian trace-score counterpart.

\section*{Statements and Declarations}
\textbf{Funding.} No external grant funding was received for this work. Pattern Recognition Pty Ltd provided internal financial and in-kind research support.

\textbf{Competing interests.} Azadeh Alavi is affiliated with RMIT University and Pattern Recognition Pty Ltd. Fatemeh Kouchmeshki and Hossein Akhoundi are affiliated with Pattern Recognition Pty Ltd. The authors and/or Pattern Recognition Pty Ltd have proprietary and commercial interests in geometry-aware quantum machine-learning software. A related Australian provisional patent application has been filed. The manuscript discloses mathematical identities, representation taxonomy, and public or synthetic-data experiments; proprietary implementation code and confidential operational materials are not disclosed.

\textbf{Data availability.} The empirical examples use public scikit-learn toy datasets and deterministic synthetic generators described in the manuscript. Aggregate table outputs and deterministic split descriptions may be made available from the corresponding authors upon reasonable request, subject to journal policy and removal of proprietary implementation materials. No private customer data or confidential company dataset is used in the manuscript.

\textbf{Code availability.} Executable implementation code is proprietary to Pattern Recognition Pty Ltd and is not publicly released. The manuscript provides the mathematical identities, modelling assumptions, synthetic-data descriptions, aggregate outputs, and audit definitions needed to assess the article-level claims. Non-proprietary table summaries may be made available from the corresponding authors upon reasonable request, subject to journal policy and removal of proprietary implementation materials.

\textbf{Ethics approval and consent to participate.} Not applicable. The study uses public toy datasets and synthetic data only.

\textbf{Author contributions.} All three authors contributed equally to the conception and design of the study, mathematical development, experimental design, interpretation of results, manuscript preparation, critical revision, and approval of the final manuscript. All three authors are corresponding authors.

\end{document}